\tikzstyle{anypath}=[->,dashed]
\tikzstyle{vertex}=[draw,ellipse,inner sep=0.5mm]
\tikzstyle{inputvertex}=[draw,rectangle,inner sep=1mm]
\DeclareMathOperator{\poly}{poly}
\newtheorem{theorem}{Theorem}
\newtheorem{lemma}{Lemma}
\newtheorem{remark}{Remark}
\newtheorem{definition}{Definition}
\newtheorem{corollary}{Corollary}
\title{Families with infants: a general approach to solve hard partition problems\thanks{A revised version of this paper is available at \url{http://arxiv.org/abs/1410.2209}}}
\author{
Alexander~Golovnev
\thanks{New York University}
\and
Alexander~S.~Kulikov 
\thanks{St.~Petersburg Department of Steklov Institute of Mathematics of the Russian Academy of Sciences}
\and
Ivan~Mihajlin
\thanks{St.~Petersburg Academic University}
}
\date{}
\begin{document}

\maketitle


\begin{abstract}
We introduce a general approach for solving partition problems where the goal is
to represent a given set as a union (either disjoint or not) of subsets satisfying certain properties. Many NP-hard problems can be naturally stated as such partition problems. We show that if one can find a large enough system of so-called families with infants for a given problem, then this problem can be solved faster than by a straightforward algorithm. 
We use this approach to improve known bounds for several NP-hard problems
as well as to simplify the proofs of several known results.

For the chromatic number problem we present an algorithm with $O^*((2-\varepsilon(d))^n)$ time and exponential space for graphs of average degree~$d$. This improves the algorithm by Bj\"{o}rklund et al. [Theory Comput. Syst. 2010] that works for graphs of bounded maximum (as opposed to average) degree
and closes an open problem stated by Cygan and Pilipczuk [ICALP 2013].

For the traveling salesman problem we give an algorithm working in 
$O^*((2-\varepsilon(d))^n)$ time and polynomial space for
 graphs of average degree~$d$. The previously known results of this kind is a polyspace algorithm by Bj\"{o}rklund et al. [ICALP 2008] for graphs of bounded maximum degree and an exponential space algorithm for bounded average degree by Cygan and Pilipczuk [ICALP 2013].

For counting perfect matching in graphs of average degree~$d$ we present an algorithm with running time $O^*((2-\varepsilon(d))^{n/2})$ and polynomial space. 
Recent algorithms of this kind due to 
Cygan, Pilipczuk [ICALP 2013] and Izumi, Wadayama [FOCS 2012] (for bipartite graphs only) 
use exponential space.
\end{abstract}





\section{Introduction}

In this paper we consider algorithms for three classical hard problems: the traveling salesman problem, the chromatic number problem, and the problem of counting
perfect matchings.
$O^*(2^n)$ algorithms for the traveling salesman problem by Bellman~\cite{B1962} and Held and Karp~\cite{HK1971} are known for more than 50 years already ($n$~is the number of vertices of an input graph). The current record upper bound $O^*(2^n)$ for the chromatic number problem
is proved by Bj\"{o}rklund et al.~\cite{BHK2009} in~2006. The number of perfect matchings in an $n$-vertex graph can be computed in time $O^*(2^{n/2})$ as shown by Bj\"{o}rklund~\cite{B2012} in 2012 (this matches the bound of Ryser's algorithm~\cite{R1963} for bipartite graphs).

These problems (and many others) can be seen as partition problems:
in the chromatic number problem the goal is to partition the vertices into
independent sets; in the traveling salesman problem one needs to partition
the vertices into several Hamiltonian sets with minimal total weight; in counting perfect matchings the goal is to partition the vertices into adjacent pairs.

We show a general way to solve such partition problems in $O^*(2^n)$ time
where $n$ is the size of a set to be partitioned. The method is in some sense a rephrasing of the method from~\cite{BHK2009} where instead of the inclusion-exclusion principle we use
the fast Fourier transform (FFT). See, e.g.,~\cite{DPV2006} for an introduction to FFT. In particular, the method immediately implies an $O^*(2^n)$
upper bound for such partition problems as chromatic number, maximum $k$-cut, domatic number, bin packing.

For all three problems mentioned above (chromatic number, traveling salesman, counting perfect matchings),
improving 
the known bound for the general case
is a major open problem in the field of algorithms for NP-hard problems. However in recent years it was shown that the bound can be improved for various special cases.
In~\cite{BHKK2008, BHKK2010, CP2013} 
better upper bounds are proved for graphs of bounded degree.

We further extend our approach to get bounds of the form
$O^*((2-\epsilon)^n)$ in various special cases. Namely we show that
such a bound follows almost immediately if the corresponding partition problem possesses a certain structure. Informally, this structure can be 
described as follows. Assume that a group of people is going to an excursion and our task is to seat them into buses with several constraints
saying that a pair of people does not want to see each other in the same bus.
This is the coloring problem and it can be solved in $O^*(2^n)$ time. Assume now that we have additional constraints: the group of people
contains several infants and these infants should be accompanied by their relatives in a bus. Roughly, we prove that if the number of infants is linear
then the problem can be solved in $O^*((2-\varepsilon)^n)$ time.

Using this approach we unify several known results of this kind. An additional advantage of the approach is that it is particularly easy to use it as a black box. Namely, all one needs to do is to reveal the corresponding structure of families with infants. This way, some of the known results follow just in a few lines. By using additional combinatorial ideas we also prove the following new results.

For the chromatic number problem, Bj\"{o}rklund et al.~\cite{BHKK2010} presented an algorithm working in time
$O^*((2-\varepsilon(\Delta))^n)$ on graphs of bounded maximum degree~$\Delta=O(1)$. The algorithm is based on Yate's 
algorithm and Moebios inversion and thus uses exponential space. We extend this result to a wider class 
of bounded average degree graphs. This closes an open problem concerning the existence of such an algorithm stated
by Cygan and Pilipczuk~\cite{CP2013}.

For the traveling salesman problem on graphs of maximum degree~$\Delta=O(1)$, Bj\"{o}rklund et al.~\cite{BHKK2008} presented an algorithm working in time $O^*((2-\varepsilon(\Delta))^n)$
and exponential space. Cygan and Pilipczuk~\cite{CP2013} extended the result to graphs of bounded average (instead of maximum) degree. Both algorithms are based on dynamic programming and the savings in the running time comes from an observation that in case of bounded degree graphs an algorithm does
not need to go through all possible subsets of vertices (e.g., a disconnected subset of vertices is not Hamiltonian for sure). It is also because of the dynamic programming technique that both mentioned algorithms use exponential space. We further improve these results by showing an algorithm working in time
$O^*((2-\varepsilon(d))^n)$ and polynomial space on graphs of average degree~$d$.

Cygan and Pilipczuk~\cite{CP2013} developed an algorithm with running time $O^*((2-\varepsilon(d))^{n/2})$ and exponential space for counting perfect matching in graphs of average degree~$d$. We present an algorithm solving this problem in $O^*((2-\varepsilon(d))^{n/2})$ time and polynomial space. Several bounds of this kind are already known for bipartite graphs~\cite{ALS1991,BF2002,SW2005,RBR2009,IW2012,CP2013}.

\section{Notation}
Let $G=(V,E)$ be a simple undirected graph. Throughout the paper we implicitly assume that the set of vertices of a graph under consideration is $V=\{1,2, \dots, n\}$. For simplicity, we consider undirected graphs only (whether a graph is directed or not is only important for the traveling salesman problem; the presented algorithm works for both undirected and directed graphs).

By $d(G)$ and $\Delta(G)$ we denote the average and the maximum degree of $G$ (we omit $G$ if it is clear from the context).
$N_G(v)$ is a \emph{neighborhood} of $v$ in~$G$, i.e., all the neighbors of $v$ in~$G$ and $N_G[v]=N_G(v) \cup \{v\}$ is its \emph{closed neighborhood}.
For $S \subseteq V$, by $G[S]$ we denote a subgraph of $G$ induced by $S$. We use $G\setminus S$ as a shortcut for $G[V \setminus S]$.

We often exploit the following simple fact: one can find in $G$ an independent set of size at least $\frac{n}{\Delta(G)+1}$ in polynomial time (this is done by a straightforward greedy algorithm).

The \emph{square} of $G=(V,E)$ is a graph $G^2=(V,E')$ where $E' \supseteq E$ is
\[E'=\{(u,v) \colon \text{ there is a path of length at most $2$ from $u$ to $v$ in $G$}\} \,.\] Note that $\Delta(G^2) \le (\Delta(G))^2$ and hence one can easily find an independent set of size $\frac{n}{(\Delta(G))^2+1}$ 
in~$G^2$.

Following~\cite{CP2013}, by 
$V_{>c}$ we denote a subset of vertices $V$ of degree greater than~$c$.
$V_{<c}$, $V_{=c}$, $V_{\le c}$, $V_{\ge c}$ are defined similarly.


\section{Partition problems}\label{sec:partitions}

\begin{definition}
Let $V=\{1, \dots, n\}$, $1\le k \le n$ be an integer and $\mathcal{F}=\{\mathcal{F}_1, \ldots, \mathcal{F}_k\}$, where each $\mathcal{F}_i \subseteq 2^V$ is a family of subsets of~$V$. A~\emph{$(V,k,\mathcal{F})$-partition problem} is to represent $V$ as a disjoint union of $k$ sets from~$\mathcal{F}_i$'s:
\[V=F_1 \sqcup \ldots \sqcup F_k \text{, where } F_i \in \mathcal{F}_i, \forall 1 \le i \le k \, .\]
\end{definition}

The brute force search algorithm for this problem takes time $O^*(\max_{1 \le i \le k}|\mathcal{F}_i|^k)$.
Using FFT one can easily prove an upper bound $O^*(2^n)$ which beats the previously mentioned bound in many interesting cases.

\begin{theorem}\label{thm:simple}
A $(V,k,\mathcal{F})$-partition problem can be solved in $O^*(2^n)$ time and space.
\end{theorem}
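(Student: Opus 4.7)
The plan is to reduce the partition problem to computing a subset (disjoint-union) convolution of indicator functions, and then evaluate that convolution at the point $V$ using a Yates-style transform, which is what the introduction calls ``FFT'' (the fast Walsh/Möbius transform over $\mathbb{Z}_2^n$, cf.\ the DPV reference).

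First, for each $i$, encode the family $\mathcal{F}_i$ as its indicator function $f_i:2^V\to\{0,1\}$, where $f_i(S)=1$ iff $S\in\mathcal{F}_i$. Define the subset convolution $(f\ast g)(S)=\sum_{A\sqcup B=S} f(A)g(B)$, where the sum ranges over disjoint pairs. Then by definition, the number of ordered partitions $V=F_1\sqcup\cdots\sqcup F_k$ with $F_i\in\mathcal{F}_i$ is exactly $(f_1\ast\cdots\ast f_k)(V)$, so the decision/counting version of a $(V,k,\mathcal{F})$-partition problem reduces to evaluating this quantity.

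Next, I would compute the convolution using the standard rank-splitting reduction from subset convolution to ``union'' (OR) convolution. Split each $f_i$ into $n+1$ layers $f_i^{(r)}(S)=f_i(S)\cdot[|S|=r]$, and treat the sequence $(f_i^{(0)},\dots,f_i^{(n)})$ as a polynomial in a formal variable whose coefficients are functions on $2^V$. Apply the zeta transform $\hat{h}(S)=\sum_{T\subseteq S}h(T)$ to each layer; this transform is computed in place by Yates's algorithm in $O(n 2^n)$ time, and it turns the union-convolution $(f,g)\mapsto\sum_{A\cup B=S} f(A)g(B)$ into pointwise multiplication. Multiply the transformed layered functions as polynomials (pointwise at each $S\in 2^V$), retaining only layers of rank $\le n$. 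After all $k-1$ multiplications, invert the zeta transform (Möbius transform, again $O(n 2^n)$) on each layer; the value $(f_1\ast\cdots\ast f_k)(V)$ is then read off from the layer of rank $n$. Handling $k$ convolutions costs $k\cdot n$ zeta transforms and $k$ pointwise polynomial products of length $n{+}1$, giving a total of $O(n^3 k\cdot 2^n)=O^*(2^n)$ time; storage is $O(n\cdot 2^n)=O^*(2^n)$ for the layered transformed tables.

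The only genuine subtlety is passing from union-convolution (which is trivially diagonalized by the zeta transform) to the disjoint union required by a partition; the rank-splitting trick is what resolves this, because $|A|+|B|=|A\cup B|$ precisely when $A$ and $B$ are disjoint, so restricting to the top layer after the polynomial product kills all non-disjoint contributions. Everything else is bookkeeping: $k\le n$ so the factor $k$ is absorbed into $\poly(n)$, and the counting version works over $\mathbb{Z}$ while the decision version works over any semiring in which one can detect nonzero entries.
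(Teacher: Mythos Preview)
Your argument is correct: fast subset convolution via rank-splitting and the zeta/M\"obius transforms computes $(f_1*\cdots*f_k)(V)$ in $O^*(2^n)$ time and space, and this value is nonzero iff a valid partition exists. One small inaccuracy: the reference to \cite{DPV2006} is to the ordinary Cooley--Tukey FFT for univariate polynomial multiplication, not to a Walsh or M\"obius transform, so your parenthetical identification of the two is off.

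The paper takes a genuinely different route. It encodes each $F\subseteq V$ by the integer $\chi(F)=\sum_{j\in F}2^{\,j-1}$ and forms, for each $i$, a polynomial $P_i(x,y)=\sum_{F\in\mathcal{F}_i}x^{\chi(F)}y^{|F|}$; the product $P_1\cdots P_k$ (truncated to $x$-degree $<2^n$ and $y$-degree $\le n$) is computed by standard FFT, and the answer is the coefficient of $x^{2^n-1}y^{n}$. The constraint $\sum_i|F_i|=n$ together with $\sum_i\chi(F_i)=2^n-1$ forces the $F_i$ to partition $V$, by the digit-sum argument (a sum of $n$ powers of two has binary digit sum $\le n$, with equality only when all the powers are distinct). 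This is morally the same ``rank'' trick you use, but realised over $\mathbb{Z}[x]$ instead of over the subset lattice. The payoff of the integer-encoding viewpoint is Theorem~\ref{thm:enc}: once subsets are exponents, one can \emph{re-encode} so that inside each family $R_i$ only $2^q-1$ local patterns are representable, which shrinks the relevant polynomial degree to roughly $2^{\,n-pq}(2^q-1)^p$ and yields the speedup. That compression has no direct analogue in the zeta/M\"obius framework, which is tied to the full lattice $2^V$.
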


Note that Theorem~\ref{thm:simple} immediately implies $O^*(2^n)$ upper bound for such problems as domatic number, chromatic number, maximum $k$-cut. These bounds were proved relatively recently by Bj\"{o}rklund, Husfeldt and Koivisto~\cite{BHK2009} using the inclusion-exclusion method.

Below, we formally define a combinatorial structure called families with infants that
allow to get smaller than $O^*(2^n)$ upper bounds for partition problems.

\begin{definition}
$\mathcal{R}=((R_1,r_1), \dots, (R_p,r_p))$ is a called a \emph{$(p,q)$-system of families with infants} for a $(V,k,\mathcal{F})$-partition problem if all of the following conditions are satisfied:
\begin{enumerate}
\item $pq \le n$;
\item for all $i=1,\dots,p$, $r_i \in R_i$; $r_i$ is called \emph{an infant} and all the elements of $R_i\setminus \{r_i\}$ are called \emph{relatives} of $r_i$; the sets $R_i$ are called \emph{families};
\item the size of each family $R_i$ is at most $q$;
\item all families $R_i$'s are pairwise disjoint;
\item in any valid partition each infant is accompanied by at least one of its relatives:
\begin{equation}\label{eq:mainprop}
\text{for all $1 \le i \le p, 1 \le j \le k$ and all $F \in \mathcal{F}_j$, if $r_i \in F$ then $|F \cap R_i| \ge 2$.}
\end{equation}
\end{enumerate}
\end{definition}


\begin{theorem}\label{thm:enc}
Let $\mathcal{R}=((R_1,r_1), \dots, (R_p,r_p))$ be a {$(p,q)$-system of families with infants}
for a $(V,k,\mathcal{F})$-partition problem. 
Then the problem can be solved in time and space
\begin{equation}
O^*\left(2^{n} \cdot \left(\frac{2^q-1}{2^q}\right)^p \cdot 2^q\right) \, .
\end{equation}
\end{theorem}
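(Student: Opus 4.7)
Plan: My approach is to adapt the subset-convolution / FFT algorithm behind Theorem~\ref{thm:simple} so that it operates on admissible subsets only. Encode the number of ordered partitions as $(f_1 * \cdots * f_k)(V)$ where $f_j = \mathbf{1}_{\mathcal{F}_j}$ and $*$ is the subset convolution on $2^V$. Property~\eqref{eq:mainprop} forces each $f_j$ to be supported on
\[
\mathcal{V} \;=\; \{S \subseteq V : S \cap R_i \ne \{r_i\}\text{ for all }i=1,\dots,p\},
\]
and a short case analysis shows $\mathcal{V}$ is closed under disjoint union: if $T,U \in \mathcal{V}$ are disjoint and $(T \sqcup U) \cap R_i = \{r_i\}$, then the side of the partition containing $r_i$ contains no other element of $R_i$, contradicting admissibility. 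Consequently every intermediate convolution $f_1 * \cdots * f_j$ is also supported on $\mathcal{V}$. Using that the $R_i$ are pairwise disjoint and each $|R_i| \le q$,
\[
|\mathcal{V}| \;=\; \prod_{i=1}^p (2^{|R_i|}-1) \cdot 2^{|V\setminus\bigcup_i R_i|} \;\le\; 2^n\left(\tfrac{2^q-1}{2^q}\right)^p,
\]
which already supplies the leading factor in the claimed running time.

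Next, I would carry out the convolution while storing values indexed essentially by $\mathcal{V}$, exploiting the tensor factorisation $V = R_1 \sqcup \cdots \sqcup R_p \sqcup W$: the rank-annotated zeta and M\"obius transforms split as tensor products, so I process them family by family. Inside each $R_i$-step the local transform acts on $2^{|R_i|}$-dimensional slices, running through all $2^{|R_i|} \le 2^q$ patterns (including the forbidden $\{r_i\}$ pattern as temporary scratch), after which the $\{r_i\}$ slot is projected away before moving on to the next family. The per-family overhead of $2^q$ multiplied by the $|\mathcal{V}|$ admissible outer profiles gives the bound $O^*(|\mathcal{V}| \cdot 2^q) = O^*(2^n \cdot ((2^q-1)/2^q)^p \cdot 2^q)$, both in time and in space; pointwise multiplication of the rank-annotated transforms and the final M\"obius inversion fit comfortably inside this budget.

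The main obstacle is the asymmetry that $\mathcal{V}$ is closed under disjoint union but \emph{not} under taking subsets---for example, $\{r_i\} \notin \mathcal{V}$ even though $\{r_i,\text{relative}\} \in \mathcal{V}$---so the zeta transform of a $\mathcal{V}$-supported function need not itself be $\mathcal{V}$-supported. This is exactly what prevents a naive ``compute zeta only on $\mathcal{V}$'' from working, and why the family-by-family scheme above must temporarily carry the $\{r_i\}$ slice as scratch during each $R_i$-step. Verifying that those scratch values disappear cleanly after projection and that the final entry at $V$ remains the correct count is the bookkeeping piece I expect to require the most care.
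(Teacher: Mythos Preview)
Your plan is correct and matches the paper's approach at the level of the key idea: both restrict attention to the admissible family $\mathcal{V}=\{S:S\cap R_i\neq\{r_i\}\text{ for all }i\}$, use that $|\mathcal{V}|\le 2^n((2^q-1)/2^q)^p$, and run a fast-transform algorithm over this restricted index set rather than over all of $2^V$.

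A couple of remarks on the execution. First, your closure observation extends from disjoint union to the cover product as well: if $T\cup U=S$ with $S\cap R_i=\{r_i\}$, then one of $T,U$ meets $R_i$ in exactly $\{r_i\}$; hence each ranked cover product $f^{(a)}\cup g^{(b)}$ is also $\mathcal{V}$-supported, which is what ultimately justifies doing the M\"obius step inside $\mathcal{V}$. Second, the bookkeeping you flag as delicate is governed by a single identity: for any $\mathcal{V}$-supported $f$, the zeta transform satisfies $\zeta f(S)=\zeta f(S\setminus\{r_i\})$ whenever $S\cap R_i=\{r_i\}$. This means that during the forward pass the discarded $\{r_i\}$-slot is redundant (it duplicates the $\emptyset$-slot), and the pointwise product inherits the same identity; consequently the local M\"obius on $R_i$ sends the $\{r_i\}$-slot to zero, so projecting it away after each family loses nothing. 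Once you state this identity explicitly, the ``scratch values disappear cleanly'' verification is immediate. In fact, with this analysis the per-family blowup is only $2^{|R_i|}/(2^{|R_i|}-1)\le 2$, so your scheme actually runs in $O^*(|\mathcal{V}|)$ and the extra $2^q$ factor you carry is slack --- which is harmless for the stated theorem.
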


Roughly, the savings in the running time comes from the fact that 
while looking for a valid partition of $V$ one can avoid the case $F \cup R_i=\{r_i\}$
(i.e., instead of considering all $2^q$ possibilities of $F \cap R_i$ one considers
$2^q-1$ of them).

We always use Theorem~\ref{thm:enc} with $q=O(1)$ and $p=\Omega(n)$, which makes the running time less than $O^*(2^n)$.
\begin{corollary}\label{cor:enc}
Let $\mathcal{R}=((R_1,r_1), \dots, (R_p,r_p))$ be a {$(p,q)$-system of families with infants}
for a $(V,k,\mathcal{F})$-partition problem. If $q=O(1)$ and $p=\Omega(n)$,
then the problem can be solved in time and space $O^*((2-\varepsilon)^n)$.
\end{corollary}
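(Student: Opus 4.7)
The plan is to derive this as a straightforward algebraic consequence of Theorem~\ref{thm:enc}. Applying that theorem gives an upper bound of the form
\[
O^*\!\left(2^n \cdot \left(\frac{2^q-1}{2^q}\right)^p \cdot 2^q\right),
\]
and I just need to massage the three factors using the two hypotheses $q=O(1)$ and $p=\Omega(n)$.

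First I would dispose of the trailing $2^q$ factor: since $q=O(1)$, we have $2^q=O(1)$, which is polynomial (indeed constant) and thus absorbed by the $O^*$ notation. This leaves $2^n \cdot \alpha^p$ with $\alpha := (2^q-1)/2^q$. The key observation is that $\alpha$ is a \emph{constant strictly less than $1$} (because $q$ is a fixed constant, hence so is $2^q$, and $\alpha = 1 - 2^{-q} < 1$).

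Next I would use $p = \Omega(n)$, i.e.\ $p \ge cn$ for some constant $c>0$ and all sufficiently large $n$. Since $\alpha<1$ the function $\alpha^x$ is decreasing, so $\alpha^p \le \alpha^{cn} = \beta^n$, where $\beta := \alpha^c$ is again a constant with $0<\beta<1$. Therefore
\[
2^n \cdot \alpha^p \;\le\; 2^n \cdot \beta^n \;=\; (2\beta)^n.
\]
Setting $\varepsilon := 2(1-\beta) = 2 - 2\beta$, one has $\varepsilon>0$ and $2\beta = 2-\varepsilon$, so the overall running time and space is $O^*((2-\varepsilon)^n)$, as required.

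There is essentially no obstacle beyond bookkeeping: the only subtlety is making sure the dependence on $q$ is through a \emph{constant} (so that $\alpha$ is bounded away from $1$ and $2^q$ can be hidden in $O^*$), and that the $\Omega(n)$ lower bound on $p$ is used to convert the base-$\alpha$ exponential in $p$ into a base-$\beta$ exponential in $n$. Both are immediate from the hypotheses $q=O(1)$ and $p=\Omega(n)$, so the corollary follows.
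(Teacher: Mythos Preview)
Your proof is correct and is exactly the intended derivation: the paper states this corollary as an immediate consequence of Theorem~\ref{thm:enc} without spelling out the algebra, and your argument simply makes explicit that $2^q$ is absorbed by $O^*$ while $((2^q-1)/2^q)^p$ becomes $\beta^n$ for some constant $\beta<1$. There is nothing to add.
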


As an illustration of Theorem~\ref{thm:enc} we replicate a result from~\cite{BHKK2010}. In the (decision version of) domatic number problem the question is to partition the set of vertices into $k$ dominating sets.

\begin{lemma}\label{lemma:domatic}
The domatic number problem in a graph of maximum degree $\Delta=O(1)$ can be solved in time and space $O^*((2-\varepsilon(\Delta))^n)$.
\end{lemma}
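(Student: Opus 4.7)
The plan is to apply Corollary~\ref{cor:enc} by exhibiting a $(p,q)$-system of families with infants for the domatic-number partition problem with $q=O(1)$ and $p=\Omega(n)$. I cast the input graph $G$ together with the integer $k$ as a $(V,k,\mathcal{F})$-partition problem by letting each $\mathcal{F}_j$ be a suitable subfamily of the dominating sets of $G$, to be refined below so that the infant condition holds.

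To choose infants and their families, I would use that $\Delta(G)=O(1)$ implies $\Delta(G^2)\le\Delta^2=O(1)$, so the greedy procedure mentioned in the Notation section produces in polynomial time an independent set $I\subseteq V$ in $G^2$ of size $|I|\ge n/(\Delta^2+1)=\Omega(n)$. Set $p:=|I|$ and index the infants as $I=\{v_1,\dots,v_p\}$; for each $i$ take $r_i:=v_i$ and $R_i:=N_G[v_i]$. Independence of $I$ in $G^2$ makes the closed neighborhoods $\{N_G[v_i]\}_i$ pairwise disjoint, and each has size at most $q:=\Delta+1=O(1)$, so conditions (1)--(4) of the families-with-infants definition are immediate and $pq\le n$ holds trivially.

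The main obstacle is condition~(5): for every $F\in\mathcal{F}_j$ and every $i$, $r_i\in F$ must imply $|F\cap R_i|\ge 2$. A generic dominating set containing $v_i$ need not contain any neighbor of $v_i$, so I would restrict $\mathcal{F}_j$ to the ``infant-compatible'' dominating sets---those satisfying $r_i\in F\Rightarrow|F\cap R_i|\ge 2$ for every $i$---and justify this restriction by an exchange argument: given any $k$-domatic partition and any ``bad'' infant $v_i$ whose part $D_{j(v_i)}$ contains no neighbor of $v_i$, the $\le\Delta$ neighbors of $v_i$ are spread across the other $k-1\le\Delta$ parts, so one can move a carefully chosen neighbor into $D_{j(v_i)}$ to fix the defect without breaking the dominating status of either affected part (bounded $\Delta$ is essential here). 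The $G^2$-independence of $I$ keeps repairs for distinct infants localized and non-interfering. Once condition~(5) is established, Corollary~\ref{cor:enc} directly gives the claimed $O^*((2-\varepsilon(\Delta))^n)$ bound on both time and space.
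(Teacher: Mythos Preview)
Your setup is correct up to the point where you pick the infants, but the choice $r_i=v_i$ together with the proposed exchange argument does not go through. Take $G$ to be a disjoint union of copies of $K_{\Delta+1}$ and $k=\Delta+1$. This graph is $\Delta$-regular with domatic number $\Delta+1$, and in \emph{every} $(\Delta+1)$-domatic partition each part meets each copy in exactly one vertex; hence for every vertex $v$ the part containing $v$ contains no neighbour of $v$. Every infant is therefore ``bad'', and no sequence of moves can repair this: your restricted families $\mathcal{F}_j$ admit no valid $k$-partition even though the original instance does. The claim that ``one can move a carefully chosen neighbour into $D_{j(v_i)}$ \dots without breaking the dominating status of either affected part'' is false in general, and bounded $\Delta$ does not save it.

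The paper sidesteps this with a different, and in fact simpler, choice of infant. It still builds $R_i$ around $N_G[v_i]$ for $v_i$ in a $G^2$-independent set, but then appends to each $R_i$ one extra vertex taken from $V\setminus\bigcup_i N_G[v_i]$ and declares \emph{that appended vertex} to be the infant $r_i$. Now the relatives of $r_i$ include all of $N_G[v_i]$, and since any dominating set $F$ must intersect $N_G[v_i]$ (to dominate $v_i$), $F$ automatically contains a relative of $r_i$; condition~(5) holds for the full family of dominating sets with no restriction and no exchange argument needed. To ensure enough leftover vertices for the appended infants one takes $p=n/(\Delta^2+2)$ and $q=\Delta+2$, which still gives $p=\Omega(n)$ and $q=O(1)$.
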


\begin{proof}\label{page:domaticproof}
The domatic number problem is a $(V,k,\mathcal{F})$-problem where each $\mathcal{F}_i$ is just the set of all dominating sets of $G$. By definition, for any $v \in V$ and any dominating set $U \subseteq V$, $N_G[v] \cap U \neq \emptyset$. This gives a straightforward construction of families with infants.

Find greedily an independent set $I \subseteq V$ of size $p=\frac{n}{\Delta^2+2}$ in $G^2$. Assume w.l.o.g. that $I=\{1, \dots, p\}$. For each $1 \le i \le p$,
let $R_i = N_G[i]$. 
At this point we have at least $n-p(\Delta+1) \ge p$ remaining vertices in $V \setminus \cup_{i=1}^{p}R_i$. So, we can extend each $R_i$ by one vertex
and declare this one additional vertex as the infant of $R_i$.

All $R_i$ have size at most $q=\Delta+2=O(1)$, the total number of $R_i$'s is $p=\frac{n}{\Delta^2+2}=\Omega(n)$. Clearly $pq \le n$. The constructed sets satisfy the property~(\ref{eq:mainprop}) by an obvious reason: each $R_i$ is a superset of $N_G[v]$ for some $v \in V$ and none of these elements is the infant of the family~$R_i$. And $U \cap N_G[v] \neq \emptyset$ for any dominating set~$U$ and any vertex~$v$, i.e.,
any dominating set always contains at least one relative of $r_i$
(even if it does not contain $r_i$).

The upper bound now follows from Theorem~\ref{thm:enc} and Corollary~\ref{cor:enc}.
\end{proof}

Another example is a faster than $O^*(2^n)$ algorithm for finding
a Hamiltonian cycle in a graph of bounded degree. This result
was given in~\cite{BHKK2008}.

\begin{lemma}\label{lemma:tspmax}
The Hamiltonian cycle problem on a graph of maximum degree $\Delta=O(1)$ can be solved in time and space $O^*((2-\varepsilon(\Delta))^n)$.
\end{lemma}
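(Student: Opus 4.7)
The plan is to reduce the Hamiltonian cycle problem to a $(V',2,\mathcal{F})$-partition problem and then exhibit a $(p,q)$-system of families with infants with $q = O(1)$ and $p = \Omega(n)$, so that Corollary~\ref{cor:enc} finishes the job.

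To break the cyclic structure into a partition, I would fix an arbitrary vertex $s \in V$ and, in an outer loop, iterate over all $n-1$ candidate vertices $m \in V\setminus \{s\}$. A Hamiltonian cycle through $G$ visits both $s$ and $m$, and splitting it at these two vertices gives a partition of $V' = V \setminus \{s,m\}$ into $A \sqcup B$ such that $G[A \cup \{s,m\}]$ and $G[B \cup \{s,m\}]$ each contain a Hamiltonian $s$-$m$ path. Conversely, any such partition glues back into a Hamiltonian cycle through $s$ and $m$. I would therefore take $\mathcal{F}_1 = \mathcal{F}_2$ to be the collection of all $s$-$m$-traceable subsets of~$V'$, turning the question into a $(V',2,\mathcal{F})$-partition problem.

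For the families with infants I would exploit the fact that every internal vertex $v$ of a Hamiltonian path has exactly two path-neighbors, both of which lie in $N_G(v)$. If moreover $v \in A$ and $v \notin N_G(s) \cup N_G(m)$, then neither path-neighbor can be $s$ or $m$, so both lie inside $A$ itself. Hence I would greedily extract an independent set $I = \{v_1,\dots,v_p\}$ in the graph $G^2$ restricted to $V \setminus (N_G[s] \cup N_G[m])$; since $\Delta(G^2) \le \Delta^2$ and only $O(\Delta)$ vertices are removed, this yields $p \ge (n - O(\Delta))/(\Delta^2+1) = \Omega(n)$. I would set $R_i = N_G[v_i]$ with infant $r_i = v_i$. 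Independence in $G^2$ makes the $R_i$ pairwise disjoint, each $|R_i| \le \Delta+1 = O(1) = q$, and $pq \le n$ holds because $(\Delta+1)/(\Delta^2+1) \le 1$. The structural observation above gives property~\eqref{eq:mainprop}: if $r_i$ lies in $F \in \{A,B\}$, its two path-neighbors lie in $F$ as well, so $|F \cap R_i| \ge 3 \ge 2$.

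Feeding this system into Theorem~\ref{thm:enc} for each of the $O(n)$ choices of $m$ delivers the claimed bound $O^*((2-\varepsilon(\Delta))^n)$. The main subtlety will be making sure that the abstract machinery behind Theorem~\ref{thm:enc} actually accommodates the Hamiltonian-path family $\mathcal{F}_i$: unlike independent sets (chromatic number) or adjacent pairs (perfect matching), deciding membership in $\mathcal{F}_i$ is itself NP-hard per query, so one cannot treat $\mathcal{F}_i$ as a black box. The expected remedy is to use a walk-counting / inclusion--exclusion representation of Hamiltonian $s$-$m$ paths (à la Karp) inside the FFT/zeta-transform aggregation of Theorem~\ref{thm:enc}, so that the families-with-infants savings and the algebraic handling of $\mathcal{F}_i$ combine cleanly. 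Once that interface is in place, the construction above plus Corollary~\ref{cor:enc} finish the proof.
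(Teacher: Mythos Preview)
Your families-with-infants construction is essentially the paper's: pick an independent set in $G^2$, set $R_i=N_G[v_i]$ with $r_i=v_i$, and observe that any Hamiltonian subpath through $r_i$ drags a neighbour along. The difference is in how you set up the partition problem, and that difference is exactly the gap you flag but do not close.

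The paper splits the cycle into \emph{three} pieces of size $n/3$ by guessing three breakpoints $v_0,v_1,v_2$. Then each $\mathcal{F}_i$ ranges only over $n/3$-subsets, and a Bellman--Held--Karp dynamic program computes all of $\mathcal{F}_0,\mathcal{F}_1,\mathcal{F}_2$ in time $O^*\bigl(\sum_{j\le n/3}\binom{n}{j}\bigr)=O^*(2^{H(1/3)n})<O^*(2^n)$. After this precomputation, Theorem~\ref{thm:enc} is applied as a pure black box and delivers the $(2-\varepsilon)^n$ bound.

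Your two-piece split produces families $\mathcal{F}_i$ ranging over subsets of \emph{all} sizes, so already building the indicator of $\mathcal{F}_i$ costs $O^*(2^n)$ via Held--Karp, wiping out any saving before Theorem~\ref{thm:enc} is invoked. The remedy you sketch---embedding a Karp-style inclusion--exclusion for Hamiltonian $s$--$m$ paths inside the FFT/zeta aggregation---is not something Theorem~\ref{thm:enc} offers; making it work would mean reopening that theorem and redesigning its proof around a specific algebraic identity for paths, which is a separate project. As written, the reduction does not let you apply the families-with-infants machinery. The clean fix is precisely the paper's: use three equal pieces so that the $\mathcal{F}_i$ are small enough to precompute outright.
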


\begin{proof}\label{page:tspmax}
Guess three vertices $v_0,v_1,v_2 \in V$. 
Let $\mathcal{F}=(\mathcal{F}_0,\mathcal{F}_1,\mathcal{F}_2)$ where $\mathcal{F}_i \subseteq 2^V$ consists of all 
subsets $S \subseteq V$ of size $|S|=n/3$ for which $G$
contains a path $P$ such that
\begin{enumerate}
\item $P$ starts in $v_i$ and ends in $u_i$ such that $(u_i,v_{(i+1) \bmod 3}) \in E$;
\item $P$ goes through all the vertices in $S$ exactly once. 
\end{enumerate}
The family $\mathcal{F}$ can be computed in time $O^*(2^{H(1/3)n})=O^*(1.99^n)$ (where $H(x)=-x\log_2-(1-x)\log_2(1-x)$ is the binary entropy function) by dynamic programming. 

We now need to construct a system of families with infants for the resulting $(V,3,\mathcal{F})$-partition problem. 
We construct the required family for $p=\frac{n}{\Delta^2+1}$ and~$q=\Delta+1$.
Find greedily an independent set $I$ of size $p=\frac{n}{\Delta^2+1}$ in~$G^2$.
Assume that $I=\{1, \dots, p\}$ and let $R_i=N_G[i]$, $r_i=i$. Clearly, if $F \in \mathcal{F}_j$
contains an infant $r_i \in R_i$ then this infant is necessarily accompanied by one of its relatives since $F$ contains a Hamiltonian path.
\end{proof}

The corresponding algorithms allow also to solve weighted partition problems. In such problems, each subset $F$ of $\mathcal{F}_i$ is assigned a non-negative integer weight $w(F)$ and the goal is to find 
a partition of minimum total weight.

\begin{theorem}\label{thm:weighted}
If in Theorems~\ref{thm:simple} and \ref{thm:enc} one is given a weighted partition problem then the upper bounds on the running time and space are multiplied by $W$ where $W$ is the maximum weight of a subset.
\end{theorem}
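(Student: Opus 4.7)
The plan is to reduce the weighted problem to its unweighted counterpart by lifting scalar values to univariate polynomials that encode weights in the exponent of a formal variable $x$. Concretely, for each family $\mathcal{F}_j$ I would replace the indicator of membership by a polynomial-valued function
\[
\hat{f}_j(S) \;=\; \sum_{F \in \mathcal{F}_j,\; F = S} x^{w(F)} \;\in\; \mathbb{Z}[x],
\]
and then run verbatim the algorithms behind Theorems~\ref{thm:simple} and~\ref{thm:enc}, but with coefficients in $\mathbb{Z}[x]$ instead of $\mathbb{Z}$. The coefficient of $x^w$ in the resulting value at $V$ would then count the number of valid partitions $V = F_1 \sqcup \dots \sqcup F_k$ of total weight exactly $w$, and the smallest exponent with nonzero coefficient gives the optimum.

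Since each subset has weight at most $W$ and at most $k \le n$ subsets are combined in any partition, every polynomial that arises has degree at most $nW$, so I would truncate throughout at degree $nW$, i.e., work in the ring $\mathbb{Z}[x]/(x^{nW+1})$. The key observation is that both algorithms are \emph{ring-oblivious}: the FFT/zeta–Möbius style subset convolution that underlies Theorem~\ref{thm:simple}, and the enumerate-and-combine step that underlies Theorem~\ref{thm:enc}, use only additions and multiplications of function values. Replacing $\mathbb{Z}$ by $\mathbb{Z}[x]/(x^{nW+1})$ therefore only inflates the cost of each such scalar operation, by a factor $O(nW \cdot \operatorname{polylog}(nW)) = O^*(W)$ in time and $O(nW) = O^*(W)$ in space, using any textbook polynomial arithmetic.

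Since the \emph{number} of ring operations performed by the two algorithms is unchanged, multiplying through gives exactly the advertised bounds: both the running time and the space usage are multiplied by $W$ (absorbing polylogarithmic factors into $O^*$). The only step that really needs verification -- and the main obstacle, such as it is -- is the ring-obliviousness claim for the two algorithms, which reduces to a direct inspection of the FFT-based subset-convolution kernel used in the proof of Theorem~\ref{thm:simple} and of the combination step used in the proof of Theorem~\ref{thm:enc}. Reporting an actual minimum-weight partition rather than just its weight is handled by a standard self-reduction: fix elements one at a time and re-invoke the decision variant, incurring only $\operatorname{poly}(n)$ extra overhead.
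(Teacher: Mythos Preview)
Your proposal is correct and is essentially the same approach the paper takes: encode the weight of each set in the exponent of an auxiliary formal variable~$x$, so that after running the unchanged FFT-based polynomial multiplication the coefficient of $x^w$ in the output records the number of valid partitions of total weight~$w$; since at most $k\le n$ sets of weight at most~$W$ are combined, truncating at degree $nW$ loses nothing and each arithmetic operation is inflated by a factor $O^*(W)$ in both time and space. The only point worth tightening is the phrase ``ring-oblivious'': FFT does not literally run over $\mathbb{Z}[x]/(x^{nW+1})$, but the intended effect is obtained either by bivariate FFT / Kronecker substitution on $\mathbb{Z}[x,y]$ or by evaluating at $nW+1$ points in~$x$ and interpolating, both of which give the claimed $O^*(W)$ overhead.
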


Also, one can turn the algorithm to use polynomial space by providing an algorithm that enumerates the sets $\mathcal{F}_i$.

\begin{theorem}\label{thm:encenum}
Let $\mathcal{F}=(\mathcal{F}_1, \ldots, \mathcal{F}_k)$ and assume that there exists an algorithm that for any $i=1,\ldots, k$ enumerates the set $\mathcal{F}_i$ in time $T$ and polynomial space. Then one can turn 
algorithms from Theorems~\ref{thm:simple}, \ref{thm:enc}, \ref{thm:weighted} into polynomial space algorithms at the cost of multiplying the running time by~$T$.
\end{theorem}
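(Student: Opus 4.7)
My plan is to re-express each of the three algorithms in a form where the only place the families $\mathcal{F}_i$ are consulted is through inner sums of the shape $\hat{f}_i(X)=\sum_{S\in\mathcal{F}_i,\ S\subseteq X}\phi_i(S)$ indexed by an outer variable $X$, and then to recompute each such $\hat{f}_i(X)$ on demand from the given enumerator rather than to tabulate it across all subsets of $V$. A single on-demand evaluation runs the enumerator in time $T$ and polynomial space, and for each produced $S\in\mathcal{F}_i$ it performs the polynomial-time check $S\subseteq X$ and accumulates $\phi_i(S)$ into a polynomial of polynomial size --- one ``rank'' indeterminate $z$ of degree at most $n$ to enforce disjointness of the partition, and, for Theorem~\ref{thm:weighted}, one additional ``weight'' indeterminate of degree at most $W$.

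All three algorithms in question ultimately evaluate an expression of the shape $\sum_X \pm\prod_{i=1}^{k}\hat{f}_i(X)$ and read off a specified coefficient of the resulting polynomial. I will traverse the outer index $X$ one value at a time: for each $X$, I compute the $k$ polynomials $\hat{f}_i(X)$ on the fly in time $O(kT\cdot\poly(n))$ using the enumerator, multiply them, extract the coefficient of interest, and add the signed contribution to a single running accumulator. At any instant only the accumulator, the $k$ polynomials currently being multiplied, and the state of the enumerator are alive, so the overall space is polynomial. The overall running time is the original $O^*$ bound multiplied by $T$, since the number of outer iterations and the per-iteration non-enumeration work are already absorbed by the $O^*$ of the original algorithm.

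The main obstacle will be Theorem~\ref{thm:enc}, where the saving in the outer iteration count comes from pruning certain intersection patterns between the outer variable and the families $R_i$ of the families-with-infants system. I will have to verify that restricting the contributing sets $S\in\mathcal{F}_i$ to be compatible with a prescribed intersection pattern on each $R_j$ can be realized as a cheap post-filter on the enumerator output, so that the cost per outer iteration stays $O(T\cdot\poly(n))$ without any rewriting of the enumerator itself. Once this compatibility is checked, the proof reduces to a routine substitution of on-demand recomputation for tabulation together with a bookkeeping analysis of time and space for each of Theorems~\ref{thm:simple}, \ref{thm:enc}, and \ref{thm:weighted}.
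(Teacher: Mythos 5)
Your approach is essentially the paper's: the algorithms of Theorems~\ref{thm:simple}, \ref{thm:enc}, \ref{thm:weighted} can all be written as a signed sum over an outer index set of products of inner transform values $\hat f_i(X)=\sum_{S\in\mathcal F_i,\,S\subseteq X}\phi_i(S)$, and the exponential space is incurred only by tabulating these inner values; replacing the tabulation with on-demand recomputation via the given enumerator (at cost $T\cdot\poly(n)$ per evaluation, with a polynomial-size rank/weight polynomial as accumulator) yields the polynomial-space variant with the claimed $T$ overhead.

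The one place you hedge --- whether the Theorem~\ref{thm:enc} speedup survives this substitution --- resolves more cleanly than you anticipate. The $(2^q-1)/2^q$ saving per family $R_i$ is obtained purely by shrinking the set of outer indices $X$ that need to be visited: property~(\ref{eq:mainprop}) guarantees that no $S\in\mathcal F_j$ has $S\cap R_i=\{r_i\}$, hence $\hat f_j(X)=\hat f_j(X\setminus\{r_i\})$ whenever $X\cap R_i=\{r_i\}$, and those outer terms cancel in pairs against their $r_i$-removed counterparts. Consequently the inner evaluation is \emph{unchanged}: for each surviving $X$ you still enumerate $\mathcal F_i$, keep exactly the $S\subseteq X$, and accumulate --- there is no additional intersection-pattern post-filter on $S$ to design or justify, and the per-iteration cost is the same $O(kT\cdot\poly(n))$ as in the Theorem~\ref{thm:simple} case. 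So the ``routine substitution'' you describe is indeed all that is needed, uniformly across all three theorems.
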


As a corollary we get a polynomial-space algorithm for the case when each $\mathcal{F}_i$ is of polynomial size.

\begin{corollary}\label{cor:encpolyspace}
Let $\mathcal{R}$ and $(V,k,\mathcal{F})$ be as in Theorem~\ref{thm:encenum}.
If for all $i=1,\ldots,k$, $\mathcal{F}_i$ is enumerable in polynomial time (in particular, $|\mathcal{F}_i|=\poly(n)$) then the corresponding algorithm uses polynomial space.
\end{corollary}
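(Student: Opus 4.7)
The plan is to treat this corollary as a direct specialization of Theorem~\ref{thm:encenum}: we simply plug in $T=\poly(n)$ and observe that the resulting space bound becomes polynomial while the time bound is unchanged under the $O^*$ convention (which suppresses polynomial factors in $n$). First I would apply Theorem~\ref{thm:encenum} to the given $(V,k,\mathcal{F})$-partition problem together with the system~$\mathcal{R}$ of families with infants. The hypothesis on polynomial-time enumerability of each $\mathcal{F}_i$ gives a concrete value $T=\poly(n)$ for the enumeration parameter in that theorem. Substituting this into the running-time bound $O^*(\cdot)\cdot T$ from Theorem~\ref{thm:encenum} yields the same asymptotic bound (since $O^*$ hides polynomial factors), and by construction the algorithm now uses polynomial space.

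To justify the parenthetical remark that $|\mathcal{F}_i|=\poly(n)$ is a sufficient condition, I would note that whenever each family has only polynomially many members and each member is a subset of $V$ of polynomial description length, one can enumerate $\mathcal{F}_i$ in polynomial time by going through the list of its elements one by one; producing the next element uses only polynomial time and polynomial working space, which is exactly the requirement of Theorem~\ref{thm:encenum}. Thus polynomial cardinality (under the standard implicit assumption that membership and listing can be computed efficiently) is just a natural special case of polynomial-time enumerability.

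There is essentially no technical obstacle here; the statement is a one-line consequence of Theorem~\ref{thm:encenum} and the definition of the $O^*$ notation. The only subtle point is making sure the enumeration is understood in the streaming sense used by Theorem~\ref{thm:encenum}, namely that the elements of $\mathcal{F}_i$ are produced one at a time in polynomial space rather than first being stored in a precomputed table; once this is clarified, the bound on the space of the final algorithm is inherited from the polynomial space used both by the enumeration subroutine and by the polynomial-space version of the families-with-infants algorithm constructed in Theorem~\ref{thm:encenum}.
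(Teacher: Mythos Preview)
Your proposal is correct and matches the paper's treatment: the paper gives no explicit proof for this corollary, since it is immediate from Theorem~\ref{thm:encenum} with $T=\poly(n)$, exactly as you argue. Your additional remarks on the parenthetical $|\mathcal{F}_i|=\poly(n)$ condition and the streaming interpretation of enumeration are accurate elaborations that the paper leaves implicit.
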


%

We conclude the section by noting that the same bounds hold also
for the case when instead of partition one looks for a cover of $V$
by $k$ subsets from $\mathcal{F}_1, \ldots, \mathcal{F}_k$. We call the corresponding problem a \emph{$(V,k,\mathcal{F})$-covering problem}.

\begin{theorem}\label{thm:covering}
Theorems~\ref{thm:simple}, \ref{thm:enc}, \ref{thm:weighted}, \ref{thm:encenum} hold for $(V,k,\mathcal{F})$-covering problems.
\end{theorem}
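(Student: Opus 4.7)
The plan is to reduce a $(V,k,\mathcal{F})$-covering problem to a $(V,k,\mathcal{F}')$-partition problem by taking the downward closure $\mathcal{F}_i' = \{F' \subseteq V : F' \subseteq F \text{ for some } F \in \mathcal{F}_i\}$. A cover $F_1 \cup \ldots \cup F_k = V$ with $F_i \in \mathcal{F}_i$ converts to a partition $F_1' \sqcup \ldots \sqcup F_k' = V$ by assigning each vertex to one covering set that contains it, and conversely any partition with $F_i' \in \mathcal{F}_i'$ extends to a cover by picking superset witnesses in $\mathcal{F}_i$. In the weighted setting one defines the weight of $F' \in \mathcal{F}_i'$ to be $\min\{w(F) : F \in \mathcal{F}_i,\, F' \subseteq F\}$ and checks that the optimal covering and partition weights coincide. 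Since the indicator of $\mathcal{F}_i'$ is precisely the superset zeta transform of $f_i$, computable in $O^*(2^n)$, applying Theorems~\ref{thm:simple} and~\ref{thm:weighted} to $\mathcal{F}'$ yields the claimed bounds for covering.

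The main obstacle is extending the reduction to Theorem~\ref{thm:enc}, because condition (\ref{eq:mainprop}) on $F$ is not inherited by a subset $F' \subseteq F$. I would repair this by further restricting the downward closure to
\[\mathcal{F}_i'' = \{F' \in \mathcal{F}_i' : \text{for every } j, \; r_j \in F' \implies |F' \cap R_j| \ge 2\}.\]
The backward direction of the reduction is unaffected. For the forward direction, when converting a cover into a partition I would first handle each infant $r_j$: pick some $F_{i_j} \ni r_j$, assign $r_j$ to $F_{i_j}''$, and additionally assign to $F_{i_j}''$ one relative from $F_{i_j} \cap R_j$, which exists because the covering set $F_{i_j}$ satisfies (\ref{eq:mainprop}); the remaining vertices are then assigned arbitrarily to any covering set containing them. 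Pairwise disjointness of the $R_j$'s guarantees that these forced infant-relative pairings do not collide with each other, and by construction each resulting $F_i''$ lies in $\mathcal{F}_i''$. Hence $\mathcal{R}$ remains a valid $(p,q)$-system of families with infants for the partition problem on $\mathcal{F}''$, and Theorem~\ref{thm:enc} together with Corollary~\ref{cor:enc} applies.

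For Theorem~\ref{thm:encenum}, materializing $\mathcal{F}''$ would blow up the enumeration cost, so I would instead reformulate the algorithm to operate on $\mathcal{F}$ directly via union convolution, i.e.\ pointwise multiplication after the zeta transform using the identity $\widehat{f \star g} = \hat{f} \cdot \hat{g}$. The values $\hat{f}_i(S) = |\{F \in \mathcal{F}_i : F \subseteq S\}|$ can be computed by iterating through the enumerator of $\mathcal{F}_i$ and checking the membership $F \subseteq S$, so the whole procedure runs in time $O^*(T \cdot 2^n)$ with polynomial space, and the families-with-infants savings survive this reformulation in the same way as in the proof of Theorem~\ref{thm:enc}.
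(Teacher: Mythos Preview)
Your downward-closure reduction is correct for Theorems~\ref{thm:simple} and~\ref{thm:weighted}, and the refinement to $\mathcal{F}_i''$ correctly restores property~(\ref{eq:mainprop}) for Theorem~\ref{thm:enc}; your check that the forced infant--relative assignments cannot collide (using disjointness of the $R_j$) is exactly what is needed. This is a clean black-box reduction to the partition theorems.

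The paper takes the more direct route. The proofs of Theorems~\ref{thm:simple}--\ref{thm:encenum} all express the answer as a sum over $S\subseteq V$ of products of quantities $\hat f_i(S)=|\{F\in\mathcal{F}_i:F\subseteq S\}|$, with an additional rank stratification to enforce disjointness in the partition case. For covering one simply drops that stratification (union convolution in place of subset convolution), and every other ingredient---in particular the re-encoding of the summation variable $S$ that yields the $((2^q-1)/2^q)^p$ saving---is unchanged. So the paper treats Theorem~\ref{thm:covering} as an immediate observation rather than a reduction. Your approach has the virtue of being self-contained and not asking the reader to reopen the earlier proofs; the paper's has the virtue of handling all four theorems uniformly.

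The one soft spot in your write-up is Theorem~\ref{thm:encenum}. You are right that materializing $\mathcal{F}''$ can inflate the enumeration time by a $2^n$ factor, and switching to the direct union-convolution formulation is the correct fix. But your final clause---that the families-with-infants saving ``survives in the same way as in the proof of Theorem~\ref{thm:enc}''---is an assertion, not an argument. To close it you should state explicitly that the saving in Theorem~\ref{thm:enc} operates on the \emph{outer} sum over $S$ (shrinking its index set via the bijection exploiting $F\cap R_j\neq\{r_j\}$), while for each surviving $S$ the inner value $\hat f_i(S)$ is still computable in time $T$ and polynomial space by streaming through the enumerator of $\mathcal{F}_i$ and testing $F\subseteq S$. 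With that sentence added, the argument is complete.
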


\section{Properties of bounded degree graphs}\label{sec:bounded}
The following three lemmas are proved by Cygan and Pilipczuk~\cite{CP2013}. We slightly extend the statements and provide the proofs for the sake of completeness.
 
The lemma below allows to find in a graph a set of vertices of high degree with a better upper bound on its size than given by the standard averaging argument.

\begin{lemma}[\cite{CP2013}, Lemma~3.2]\label{lemma:coresize}
For any graph $G=(V,E)$ of average degree at most $d$, any integer $m\ge 1$ and any $\alpha \ge 1$ there exists 
$m\le D\le M$ such that $ |V_{>D}| \le \frac{nd}{\alpha D}$
where $M=\lfloor me^{\alpha+1}+1\rfloor$.
\end{lemma}
\begin{proof}\label{page:averageproofs}
Clearly,
\[\sum_{i=0}^{\infty}|V_{>i}|=\sum_{i=0}^{\infty}i|V_{=i}| \le nd \,.\]
If, on the other hand, $|V_{>i}|>\frac{nd}{\alpha i}$ for all $m\le i\le M$ then
\[\sum_{i=0}^{\infty}|V_{>i}| \ge \sum_{i=m}^{M}|V_{>i}|>\frac{nd}{\alpha}\sum_{i=m}^{M}\frac{1}{i}=
\frac{nd}{\alpha}\left(\sum_{i=1}^{M}\frac{1}{i}-\sum_{i=1}^{m-1}\frac{1}{i}\right) \ge
\frac{nd}{\alpha}\left(\ln M-\ln(em)\right) \ge 
nd \, ,\]
where the next to last inequality uses the well-known estimate for the harmonic series:
\[ \ln(i+1) \le 1+\frac{1}{2}+\ldots +\frac{1}{i} \le \ln i + 1=\ln(ie)\,. \]
\end{proof}

\begin{remark}
Such $D$ can be easily found in polynomial time by going through all the values $D=m,\ldots, M$.
\end{remark}

The next lemma shows that one can find 
a subset of vertices of linear size that is independent in the square of a graph.

\begin{lemma}[\cite{CP2013}, Lemma~3.1]\label{lemma:indsetsize}
For any graph $G=(V,E)$ of average degree at most $d$ and maximum degree at most $\Delta$
one can find in polynomial time a subset of vertices $B \subseteq V$ such that
\begin{enumerate}
\item the size of $B$ is linear: $|B| \ge \frac{n}{6\Delta d}$;
\item degrees of vertices from $B$ are small: for any $v \in B$, $\operatorname{deg}_G(v) \le 2d$;
\item $B$ is independent in $G^2$: for any $u \neq v \in B$, $N_G[u] \cap N_G[v] = \emptyset$.
\end{enumerate}
\end{lemma}
\begin{proof}
Clearly, $|V_{\le 2d}| \ge \frac{nd}{2d}=\frac{n}{2}$. The required set $B$ can be constructed by a straightforward greedy algorithm: while $V_{\le 2d}$ is not empty, take any $v \in V_{\le 2d}$, add it to $B$, and remove from $V_{\le 2d}$ the vertex $v$ together with all its neighbors in~$G^2$.
The number of such neighbors is at most $2d+2d(\Delta-1)=2d\Delta$. Hence 
at each iteration at most $2d\Delta+1$ vertices are removed and the total number of iterations is at least 
\[\frac{|V_{\le 2d}|}{2d\Delta+1} \ge \frac{n}{4d\Delta+2} \ge \frac{n}{6d\Delta} \,.\] 
\end{proof}


The following lemma allows us to find efficiently in a graph $G$ of average degree $d=O(1)$
a subset of vertices $Y$ of high degree such that $(G\setminus Y)^2$ contains a large enough 
independent set. The last inequality in the statement
can be seen as exponential savings in the running time.

\begin{lemma}[\cite{CP2013}, Lemma~3.4]\label{lemma:ugly}
For any constants $\nu \ge 1, \mu <1, a \ge 0, 0< c < 1$ there exists $\beta>0$ such that for any graph $G=(V,E)$ of average degree~$d=O(1)$ one can find in polynomial time subsets 
$A,Y \subseteq V$ such that:
\begin{enumerate}
\item $A \cap Y = \emptyset$;
\item $A$ is an independent set in $(G\setminus Y)^2$;
\item $2|Y|\le |A| \le cn$;
\item each vertex from $A$ has at most $2d$ neighbors in $G\setminus Y$:
$\forall v \in A,\, |\{u \in V \setminus Y \colon (u,v) \in E\}| \le 2d\,;$
\item 
\begin{equation}\label{eq:ugly1}
\binom{|A|}{|Y|}^a\nu^{|Y|}\mu^{|A|} < 2^{-\beta n}\,.
\end{equation}
\end{enumerate} 
\end{lemma}
\begin{proof}
Let $\alpha=\alpha(d, \nu, \mu, a,c)$ be a large enough constant to be defined later.
Using Lemma~\ref{lemma:coresize} we can find 
\begin{equation}\label{eq:cd}
 \frac{1}{12dc} \le D \le \frac{1}{12dc}{e^{\alpha+1}+1},
\end{equation}
such that $|V_{>D}| < \frac{nd}{\alpha D}$. Let $Y=V_{>D}$. Note that the graph $G\setminus Y$ has average degree at most $d$ and maximum degree at most~$D$. Lemma~\ref{lemma:indsetsize} allows us to find
a subset $A \subseteq V \setminus Y$ such that $A$ is independent in $(G \setminus Y)^2$,
for all $v \in A$, $\operatorname{deg}_{G \setminus Y}(v) \le 2d$ and 
\[|A| \ge \frac{n-|Y|}{6dD} \ge \frac{n}{12dD} \, ,\]
where the last inequality is true when $\alpha \ge 2d$, i.e., $\alpha$ is large enough.
Remove from $A$ arbitrary vertices such that $|A|=\frac{n}{12dD}$.
Because of~(\ref{eq:cd}), $\frac{n}{12dD} \le nc$.
To guarantee that $|A|>2|Y|$ it is enough to take $\alpha \ge 24d^2$.

We now show how to choose $\alpha$ such that the last inequality from the statement is satisfied. Using the well known estimates $\binom nk \le \left(\frac{en}{k}\right)^k$
and $\binom nk \le \binom{n}{k'}$ for $k \le k' \le \frac n2$ we get
\[ \binom{|A|}{|Y|}^a \le \left(\frac{en\alpha D}{12dDnd}\right)^\frac{nda}{\alpha D}=\left(\frac{e\alpha}{12d^2}\right)^\frac{nda}{\alpha D}=(\gamma \alpha)^\frac{nda}{\alpha D} \, ,\]
where $\gamma=\frac{e}{12d^2}$ is a constant. Thus we can upper-bound
(\ref{eq:ugly1}) as follows:
\begin{equation}\label{eq:ugly2}
\left(\left(\gamma \alpha\right)^\frac{da}{\alpha}\left(\nu^d \right)^\frac{1}{\alpha}\mu^\frac{1}{12d}\right)^\frac{n}{D} \, .
\end{equation}
Recall now that $\mu<1$ and note that $(\gamma\alpha)^\frac{da}{\alpha} \to 1$ and
$\left(\nu^d\right)^\frac{1}{\alpha} \to 1$ with $\alpha \to +\infty$. Thus for a large enough $\alpha$, (\ref{eq:ugly2}) is $\left(2^{-\beta'}\right)^\frac{n}{D}$ for a constant $\beta'>0$. It remains to recall that $D < e^\alpha$ and take $\beta=\beta'e^{-\alpha}$.
\end{proof}

\section{Solving Partition Problems}

For a subset $U \subseteq V$, let $b(U) \in \{0,1\}^n$ denote the characteristic
vector of the set $U$ (i.e., $b(U)[i]=1$ iff $i \in U$). 
In the analysis below we sometimes identify a bit vector $b(U)$ with a non-negative integer between $0$ and $2^n-1$ that it represents.

For a bit vector $b$, we denote the Hamming weight of $b$ by $w(b)$, i.e., the number of $1$'s in $b$. Note the following simple fact: for any two non-negative integers
$a$ and $b$,
\begin{equation}\label{eq:carries}
w(\operatorname{bin}(a))+w(\operatorname{bin}(b)) \ge w(\operatorname{bin}(a+b))
\end{equation}
and the equality holds iff there are no carries in $a+b$.

\subsection{Proof of Theorem~\ref{thm:simple}}
\begin{proof}[Proof of Theorem~\ref{thm:simple}]
Consider the following polynomials for $1 \le i \le k$:
\[P_i(x,y)=\sum_{F \in \mathcal{F}_i}x^{|F|}y^{b(F)} \, .\]
We claim that there is a solution to a $(V,\mathcal{F},k)$-partition problem iff
the monomial $x^ny^{b(V)}$ has a non-zero coefficient in $\prod_{i=1}^{k}P_i(x,y)$.
One direction of this statement is straightforward: if there exist $F_1\in \mathcal{F}_1, \dots, F_k \in \mathcal{F}_k$ such that $F_1 \sqcup \ldots \sqcup F_k=V$ then clearly
\[\prod_{i=1}^{k}x^{|F_i|}y^{b(F_i)}=x^ny^{b(V)} \, .\]
For the reverse direction, assume that the product of the polynomials contains
the monomial $x^ny^{b(V)}$. Because of the term $x^n$,
there exist $k$ subsets $F_1 \in \mathcal{F}_1, \dots, F_k \in \mathcal{F}_k$ such that 
$|F_1|+\ldots+|F_k|=n$. In other words, the total number of $1$'s in 
all characteristic vectors of $F_i$'s is exactly $n$. Moreover,
\begin{equation}\label{eq:sum}
b(F_1)+\ldots+b(F_k)=b(V)
\end{equation}
and the number of $1$'s in the characteristic vector of $V$ is also $n$.
Now,~(\ref{eq:carries}) implies that there are no carries in~(\ref{eq:sum}).
This in turn implies that $\{F_1, \ldots, F_k\}$ is a partition of~$V$.

Now it suffices to show that the coefficient of the monomial $x^ny^{b(V)}$ in $\prod_{i=1}^{k}P_i(x,y)$ can be found in time $O^*(2^n)$. 
Since $|F|\le n$ and $k\le n$, the degree of $x$ in $\prod_{i=1}^{k}P_i(x,y)$ does not exceed $n^2$. Therefore, in order to obtain univariate polynomials we can use Kronecker substitution~\cite{K1882}. Namely, we replace $y$ by $x^{n^2+1}$. Thus, for each $1 \le i \le k$ we consider a univariate polynomial $Q_i(x)$:
\[Q_i(x)=\sum_{F \in \mathcal{F}_i}x^{|F|}x^{(n^2+1)\cdot b(F)} \, .\]
It it easy to see that the coefficient of $x^{a_1+(n^2+1)\cdot a_2}$ (where $a_1\le n^2$) in $Q_i(x)$ equals the coeffient of $x^{a_1}y^{a_2}$ in $P_i(x)$, and vice versa.
In other words, we associate an integer from $[0..(n^2+1)2^n]$ with each $F\in\mathcal{F}_i$. This integer is an encoding of the set $F$ in $n+2\log{n}$ bits, s.t. the first $n$ bits indicate elements of $F$, the next $\log{n}$ bits are zeros, and the last $\log{n}$ bits are the binary expansion of $|F|$. 
We need to find the coefficient of $x^{n+(n^2+1)\cdot b(v)}$ in $Q=\prod_{i=1}^{k}Q_i(x)$, where $\deg{(Q)}=O^*(2^n), k\le n$. One can apply FFT $k-1$ times and get the desired upper bound on the running time.
\end{proof}

\begin{remark}\label{rem:counting}
Note that the coefficient of the monomial $x^ny^{b(V)}$ in the theorem equals the number of valid $k$-partitions. So this theorem (and all the related theorems) in fact count the number of valid partitions.
\end{remark}

\subsection{Proof of Theorem~\ref{thm:enc}}

\begin{definition}
For a matrix $M=(M[i,j])_{0\le i \le p-1, 0 \le j \le q-1} \in \mathbb{Z}_{\ge 0}^{p \times q}$ let
\begin{eqnarray*}
\operatorname{colweight}(M,j)&=&\sum_{i=0}^{p-1}M[i,j]\,,\quad
\operatorname{weight}(M)=\sum_{i=0}^{p-1}\sum_{j=0}^{q-1}M[i,j]=\sum_{j=0}^{q-1}\operatorname{colweight}(M,j)\,,\\
\operatorname{rowcode}(M,i)&=& -M[i,0]+\sum_{j=1}^{q-1}2^j\cdot M[i,j]\,,\quad
\operatorname{rowsum}(M)=\sum_{i=0}^{p-1}\operatorname{rowcode}(M,i)\,,\\
\operatorname{code}(M)&=&\sum_{i=0}^{p-1}(2^q - 1)^i\cdot \operatorname{rowcode}(M,i)\,.
\end{eqnarray*}
\end{definition}

\begin{definition}
A matrix $M \in \mathbb{Z}_{\ge 0}^{p \times q}$ is called \emph{row-normalized} if $\operatorname{rowcode}(M,i) \ge 0$ for all $0 \le i \le p-1$.
\end{definition}
\begin{remark}
In the analysis below we will need the following simple estimates. Let $E \in \{0,1\}^{p \times q}$. Then
\begin{enumerate}
\item $\operatorname{rowcode}(E,i)<0$ iff $E[i]=[1,0,0,\ldots,0],$
\item $\operatorname{rowcode}(E,i) \le 2^q-2 $.
\item $\operatorname{code}(E) \le (2^q-2)\cdot \sum_{i=0}^{p-1}(2^q-1)^i<(2^q-1)^p$ (assuming $q \ge 2$).
\end{enumerate}
\end{remark}

\begin{lemma}
\label{lemma:system}
The expansion of $X\in\mathbb{Z}_{\ge 0}$ in powers of $b>1$ as $X=\sum_{i=0}^{\infty}x_i\cdot b^i, x_i\ge0$ has the minimal value of the sum of digits $\sum_{i=0}^{\infty}x_i$ iff $\forall i: 0\le x_i<b$ (i.e., $X$ is written in the numeral system of base $b$).
\end{lemma}
\begin{proof}
If $x_i\ge b$, then we can reduce $x_i$ by $b$ and increase $x_{i+1}$ by $1$, decreasing the sum of digits by $b-1>0$. The other direction follows from the uniqueness of representation in base~$b$.
\end{proof}

\begin{lemma}
\label{lemma:coding}
Let $q\ge 2$ and $E \in \{0,1\}^{p \times q}$
and $M \in \mathbb{Z}_{\ge 0}^{p \times q}$ be row-normalized matrices.
If
$\operatorname{colweight}(M,0)=\operatorname{colweight}(E,0)$, 
$\operatorname{weight}(M)=\operatorname{weight}(E)$,
$\operatorname{rowsum}(M)=\operatorname{rowsum}(E)$,
$\operatorname{code}(M)=\operatorname{code}(E)$,
then $M=E$.
\end{lemma}

\begin{proof}
The claim follows from Lemma~\ref{lemma:system}.
Since for all $i$, $\operatorname{rowcode}(E,i) \le 2^q-2 $, $\operatorname{code}(E)$ has the minimal sum of digits in base $(2^q-1)$ system. This in turn implies that for each $i$, $\operatorname{rowcode} (E,i)=\operatorname{rowcode}(M,i)$. Then the first columns of matrices $E$ and $M$ are equal modulo $2$, because parities of rowcodes depend only on the first column. Since $\operatorname{colweight}(M,0)=\operatorname{colweight}(E,0)$ we conclude that the first columns of $M$ and $E$ are equal.

Now each $\operatorname{rowcode}(E,i)$ has the minimal sum of digits in the system of base $2$, which means that $\operatorname{weight}(E)$ has the minimal possible value for these $\operatorname{rowcodes}$. It follows from Lemma~\ref{lemma:system} that each $M[i,j]$ must be equal to $E[i,j]$.
\end{proof}

\begin{definition} 
An injective function $\alpha \colon U \to \{0,\dots,p-1\} \times \{0,\dots, q-1\}$ is called a \emph{matrix representation} of a set $U$. For such $\alpha$ and $S \subseteq U$, a~\emph{characteristic matrix} $M_\alpha(S) \in \{0,1\}^{p \times q}$ is defined as follows: $i \in U$ iff $M_\alpha(S)[\alpha(i)]=1$.
\end{definition}

\begin{proof}[Proof of Theorem~\ref{thm:enc}]
Let $\mathcal{R}=((R_1,r_1), \dots, (R_p,r_p))$ be a {$(p,q)$-system of families with infants}
for a $(V,k,\mathcal{F})$-partition problem. Append arbitrary elements from $V$ to families so that the size of each family equals $q$ and the families are still disjoint (this is possible since $pq \le n$). Denote the union of  families by $R$ and the rest of $V$ by~$L$. 
For each family~$R_i$, fix an order of its elements such that
the $0$th element is~$r_i$. Now consider a matrix representation $\alpha \colon V \to \{0,\dots,p-1\} \times \{0,\dots, q-1\}$ defined as follows.  If $v$ is the $j$th element of $R_i$, then $\alpha(v)=(i,j)$.
We encode each set $F\in\mathcal{F}_i$ by parts. We encode vertices from $F\cap L$ using the standard technique from Theorem~\ref{thm:simple}. To encode vertices from $F\cap R$ we use the characteristic matrix $M_\alpha(F\cap R)$. Note that $M_\alpha(F\cap R)$ is a {row-normalized matrix}, because if $F$ contains an infant $r_i$ of a family~$R_i$, then it must contain at least one other element from the same row.
Consider the following polynomials for $1 \le i \le k$:
\[P_i(x,y,z,s,t,u)=\sum_{F \in \mathcal{F}_i}x^{|F\cap L|}\cdot y^{b(F\cap L)} \cdot z^{\operatorname{colweight}(M,0)} \cdot s^{\operatorname{weight}(M)} \cdot t^{\operatorname{rowsum}(M)} \cdot u^{\operatorname{code}(M)}\, ,\]
where $M=M_\alpha(F\cap R)$ and $b(F\cap L)$ is an integer from $0$ to $2^{|L|}-1$.

There is a solution to a $(V,\mathcal{F},k)$-partition problem iff
the monomial \[x^{|L|}y^{b(L)}z^{\operatorname{colweight}(R,0)} s^{\operatorname{weight}(R)} t^{\operatorname{rowsum}(R)} u^{\operatorname{code}(R)}\] has a non-zero coefficient in $\prod_{i=1}^{k}P_i(x,y,z,s,t,u)$. Indeed, as it was shown in Theorem~\ref{thm:simple}, $x^{|L|}y^{b(L)}$ corresponds to partitions of $L$. Lemma~\ref{lemma:coding} implies that only partitions of $R$ may have the term $z^{\operatorname{colweight}(R,0)} s^{\operatorname{weight}(R)} t^{\operatorname{rowsum}(R)} u^{\operatorname{code}(R)}$. Note that the degrees of $x,z,s$ in $\prod_{i=1}^{k}P_i(x,y,z,s,t,u)$ are bounded from above by $n^2$, the degree of $y$~--- by $n\cdot2^{|L|}$, the degree of $t$~--- by $n^2\cdot2^q$, the degree of $u$~--- by $n\cdot(2^q-1)^p$. Now we can apply Kronecker substitution: $y=x^{(n+1)^2}, z=x^{(n+1)^32^{|L|}}, s=x^{(n+1)^52^{|L|}}, t=x^{(n+1)^72^{|L|}}, u=x^{(n+1)^92^{|L|}2^q}$. The running time of FFT is bounded by the degree of the resulting univariate polynomial, i.e.
\[O^*(\poly(n)2^{|L|}2^q(2^q-1)^p)=O^*(2^{n-pq}(2^q-1)^p2^q)=O^*\left(2^n\cdot\left(\frac{2^q-1}{2^q}\right)^p \cdot 2^q\right)\,.\]
\end{proof}

\subsection{Proofs of Theorems~\ref{thm:weighted}, \ref{thm:encenum}, \ref{thm:covering}}

\begin{proof}[Proof of Theorem~\ref{thm:weighted}]
Following the proofs of Theorem~\ref{thm:simple} and \ref{thm:enc}, we introduce polynomials corresponding to $\mathcal{F}_i$'s. But in the weighted partition problem we multiply each monomial by $z^w$, where $z$ is a new variable and $w$ is the weight of the corresponding set. For example, in Theorem~\ref{thm:simple} the new polynomials would look as follows:
\[P_i(x,y,z)=\sum_{F \in \mathcal{F}_i}x^{|F|}y^{b(F)}z^{w(F)} \, ,\]
where $w(F)$ is the weight of $F$. Now it is clear that there exists a partition of total weight $w$ iff
the monomial $x^ny^{b(V)}z^w$ has a non-zero coefficient in $\prod_{i=1}^{k}P_i(x,y,z)$. We apply Kronecker substitution $y=x^{n^2+1}, z=x^{(n^2+1)n2^n}$ and use FFT to find all the coefficients of the product. Now we just need to find the smallest $w$, s.t. the coefficient of $x^{n+(n^2+1)\cdot b(V)+(n^2+1)n2^nw}$ does not equal $0$. Since the degree of the polynomial is at most $O^*(2^nW)$, the running time of the algorithm is $O^*(2^nW)$.
\end{proof}

\begin{proof}[Proof of Theorem~\ref{thm:encenum}]
Assume that we need to find the coefficient of the monomial $x^m$ in $P=\prod_{i=1}^{k}P_i(x)$, where $\deg(\Pi)\le d$. 
The theorem assumption claims that we can evaluate each $P_i$ at any point in time $O^*(T)$. Note that in order to get one specific coefficient of $P(x)$, we just need to list all the Fourier coefficients of $P$. Indeed, the coefficient of $x^m$ in $P$ equals 
$$\frac{1}{d}\sum_{i=0}^{d-1}\omega_d^{-im}P(\omega_d^k)=\frac{1}{d}\sum_{i=0}^{d-1}\omega_d^{-im}\prod_{i=1}^{k}P_i(\omega_d^k).$$
Since each $P_i(x)$ can be evaluated in time $O^*(T)$, we need $O^*(d\cdot T)$ steps and only polynomial space to find one coefficient of the product.
%
\end{proof}

\begin{proof}[Proof of Theorem~\ref{thm:covering}]
The proofs are very similar to the proofs for partition problems, the only difference is that now we consider polynomials
\[P'_i(x,y)=\sum_{F \in \mathcal{F}_i} \prod_{v\in F}({1+x\cdot y^{b(\{v\})}}) \, .\]
Let $\mathcal{F}'_i=\bigcup_{F\in\mathcal{F}_i}{2^F}$. Clearly, a $(V,\mathcal{F},k)$-covering problem has a solution iff a $(V,\mathcal{F}',k)$-partition problem has one. The polynomial $P'_i(x,y)$ corresponds to the polynomial $P_i(x,y)$ for $\mathcal{F}'$ from the proof of Theorem~\ref{thm:enc}. Namely, $x^my^{b(U)}$ has a non-zero coefficient in $P'(x,y)$ iff $U\subseteq\mathcal{F}_i$ and $|U|=m$. Again, the degree of $x$ in $\prod_{i=1}^{k}P'_i(x,y)$ is less than $n^2+1$, so we can apply Kronecker substitution $y=x^{n^2+1}$. Thus, there exists a valid cover iff the coefficient of the monomial $x^{n+(n^2+1)\cdot b(v)}$ does not equal $0$. From now on we can follow the proofs of Theorems~\ref{thm:simple}, \ref{thm:enc}, \ref{thm:weighted}, \ref{thm:encenum}.
\end{proof}

\section{The chromatic number problem}

\begin{definition}
In the \emph{list coloring problem} each vertex $v$ of input graph is assigned a list $L_v$ of allowed colors and the task is to properly color a graph such that each vertex
is given a color from its list.
\end{definition}

To reduce the search space in the list coloring problem we introduce the following problem.

\begin{definition}
In the \emph{coloring with preferences problem} each vertex $v$ together with a list $L_v$ is given a \emph{preferred color} $p_v \in L_v$ and the task is to color the graph properly such that each vertex $v$ is given a color from $L_v$ and moreover at 
least one vertex from $N[v]$ is given the color~$p_v$.
\end{definition}

\begin{lemma}
Let $G$ be a graph, $\{L_v\}_{v \in V}$ be a set of list colors for its vertices, and
$\{p_v \in L_v\}_{v \in V}$ be a set of preferred colors. Then 
there is a solution for an instance $(G, \{L_v\})$ of the list coloring problem iff
there is a solution for an instance $(G, \{L_v\}, \{p_v\})$ of the coloring with preferences problem.
\end{lemma}
\begin{proof}
Obviously, if there is a coloring satisfying preferences then it is also a list-coloring.
For the reverse direction, consider a proper list coloring such that for some vertex 
$v$ neither of the vertices from $N[v]$ is given the color $p_v$. One can then change the color of $v$ to $p_v$. This clearly does not violate any coloring constraints and moreover it strictly increases the number of vertices that are given their preferred color.
Thus, after repeating this operation at most $n$ times one arrives at a valid coloring
with preferences.
\end{proof}

Checking whether a graph $G$ has a proper $k$-coloring is a $(V,\mathcal{F},k)$-partition problem where for all $i=1,\ldots,k$, $\mathcal{F}_i=IS(G)$, the set of all independent sets of $G$. Note that Theorem~\ref{thm:simple} already implies 
a $O^*(2^n)$ time and space algorithm for the Chromatic Number problem. An algorithm with the same time and space bounds was given recently by Bj\"{o}rklund et al.~\cite{BHK2009}.

For $k$-coloring with preferences the families $\mathcal{F}_i$ are defined differently. Namely, $\mathcal{F}_i$
consists of all independent sets $I$ of $G$ that can be assigned the color $i$
without violating any list constraints and preferred color constraints:
\begin{itemize}
\item (list constraints): $\forall v \in I$, $ i \in L_v$;
\item (preferred color constraints): $\forall v \in V$  such that $p_v=i$, 
$N_G[v] \cap I \neq \emptyset$.
\end{itemize}
Using this interpretation of the coloring with preferences problem,
we give an algorithm solving the Chromatic Number problem
on graphs of bounded average degree in time $O^*((2-\varepsilon)^n)$ and space.

\begin{theorem}
There is an algorithm checking whether for a given graph $G$ of average degree $d=O(1)$ there exists a proper $k$-coloring in time $O^*((2-\varepsilon(d))^n)$
and exponential space.
\end{theorem}
\begin{proof}
First consider the case $k \ge 2d$. Note that $|V_{\ge k}| \le \frac{nd}{k} \le n/2$.
Then one can find a proper $k$-coloring 
of a graph $G[V_{\ge k}]$ in time $O^*(2^{n/2})$. Such a coloring can be easily extended to the whole graph (since there always exists a vacant color for a vertex of degree at most $k-1$). Thus, in the following we assume that $k < 2d=O(1)$.

Let $\nu \ge 1, \mu<1,a \ge 0, 0<c<1$ be constants to be defined later and let $A,Y \subseteq V$ be as provided by Lemma~\ref{lemma:ugly}. For $Y$ we try all possible colorings in time $k^{|Y|}$. A~fixed coloring of $Y$ produces a list coloring problem for~$G \setminus Y$.
%
%
Let $L$ be one of the most frequent color lists of vertices from~$A$ and assume w.l.o.g. that $L=\{1, \dots, l\}$ (hence $l \le k$). Let 
$C=\{v \in A \colon L_v = L\}$. Since there are at most $2^k$ different lists,
$|C| \ge |A|/2^k$. Remove arbitrary vertices from $C$ so that $|C|=\frac{|A|}{2^k}$.

Now, first remove a few vertices
from $C$ such that $|C|$ is divisible by 
$l+1$ and partition $C$ arbitrarily into $l+1$ subsets $C_1, \dots, C_{l+1}$
of size $\frac{|C|}{l+1}$.
For all $i=1, \ldots,l$, assign the preferred color $p_v=i$ for all vertices $v$ from $C_i$
and assign preferred colors arbitrarily for all the remaining vertices ($V\setminus Y \setminus \cup_{i=1}^{l}C_i$).

We are ready to construct sets with infants for $V\setminus Y$.
For $j=1, \dots, \frac{|C|}{(l+1)}$, let
\begin{equation}\label{eq:ri}
R_j = \bigcup_{i=1}^{l+1}N_G[C_i[j]] \text{ and } r_j=C_{l+1}[j] \, .
\end{equation}
See Figure~\ref{fig:ris} for a visual explanation.
First of all note that all $R_j$'s are disjoint since they consist of
the closed neighbourhoods of vertices from $A$ and according to Lemma~\ref{lemma:ugly} such neighbourhoods are disjoint. To show that the sets $R_j$'s are indeed sets with infants,
consider a set $F \in \mathcal{F}_i$ and assume that $r_j \in F$. Recall that $\mathcal{F}_i$ consists of the sets that can be assigned the color~$i$.
Our goal is to show that $F$ also contains at least one other element of~$R_j$. Although we do not know the color of $r_j$, we do know that its color is from $L$. Note that each family must have at list one element of each color, hence $r_j$ has a relative in~$F$. Below we formally prove that the constructed system satisfies all properties of families with infants.

\begin{figure}
\begin{center}
\begin{tikzpicture}

\draw (0,0) rectangle (8,4);
\foreach \y/\i in {3.75/C_1, 3.25/C_2, 2.5/\vdots, 1.75/C_i, 1/\vdots, 0.25/C_{l+1}}
  \node at (4,\y) {$\i$};
\foreach \y in {0.5, 1.5, 2, 3, 3.5}
  \draw (0,\y) -- (8,\y);
\foreach \y/\i in {3.75/1, 3.25/2, 1.75/i, 0.25/l+1}
  \node[anchor=east] at (-0.1,\y) {$\i$};
\foreach \x/\i in {0.25/1, 0.75/2, 6.25/j, 7.75/\frac{|C|}{l+1}}
  \node[anchor=north] at (\x,-0.2) {$\i$};  

\node at (0.5,4.5) {$C$};

\node[rectangle,anchor=west,text width=50mm] at (8.2,1.75) {the preferred color of all vertices from $C_i$ is $i$\\ (for all $1 \le i \le l$)};

\draw (6,-0.1) rectangle (6.5,4.1);

\node[anchor=south west,rectangle,text width=80mm] at (6.25,4.2) {$R_j = \bigcup_{i=1}^{l+1}N_G[C_i[j]] \text{ and } r_j=C_{l+1}[j]$};

\end{tikzpicture}
\end{center}
\caption{Schematic explanation for $R_i$'s.}
\label{fig:ris}
\end{figure}
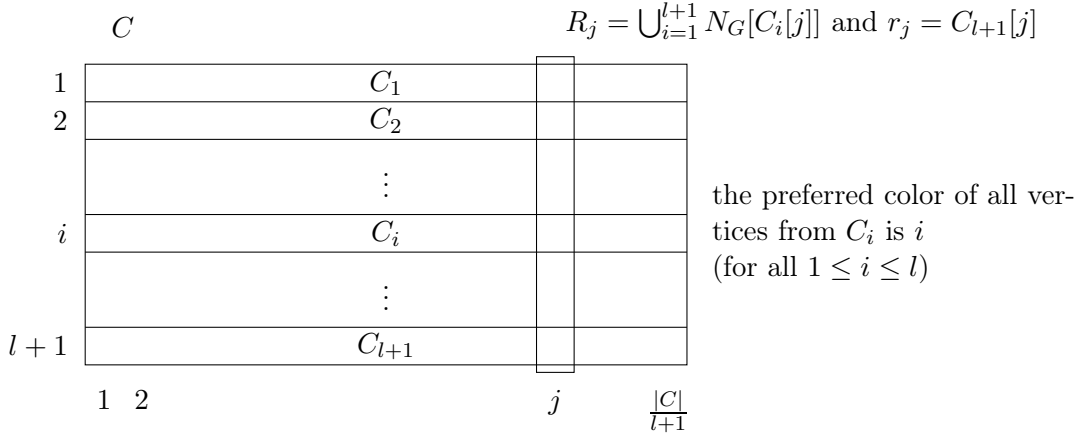

The fact 
that $r_j$ belongs to $F \in \mathcal{F}_i$ means that the vertex $r_j$ can be assigned the color~$i$.
In other words, $i \in L$. Recall from the definition (\ref{eq:ri}) of $R_j$ that
it contains, in particular, $N_G[C_i[j]]$. Since the preferred color of the vertex $C_i[j]$ is~$i$, at least one of $N_G[C_i[j]]$ must be given the color~$i$. Thus, besides of $r_j$, the set $F$ must contain also at least one element from $N_G[C_i[j]] \subseteq R_j$. 
By choosing a small enough value for $c$ one can guarantee that $pq \le n$.
We conclude that $((R_1, r_1), \dots, (R_p,r_p))$ is
a $(p,q)$-family of sets with infants for $p=\frac{|C|}{l+1}=\frac{|A|}{2^k(l+1)}$ and $q=(2d+1)(l+1) \le (2d+1)(k+1) \le (2d+1)^2$.

Theorem~\ref{thm:enc} implies that the running time of the resulting algorithm is at most
\[k^{|Y|}\cdot \left( 2^{|V \setminus Y|} \cdot 
\left(\frac{2^q-1}{2^q}\right)^p\right) \le 2^n \cdot \left( k^{|Y|} \cdot 
\left(\frac{2^q-1}{2^q}\right)^p\right) \, .\]

We now choose the constants $\nu,\mu,a$ so that (\ref{eq:ugly1}) implies that the expression in parentheses is at most $2^{-\beta n}$ for a constant $\beta > 0$. Let $a=0$, $\nu=k$, and 
\[\mu=\left(\frac{2^q-1}{2^q}\right)^{\frac{1}{2^k(l+1)}}\]
(recall that $k<2d=O(1)$ and $l \le k=O(1)$).
It is easy to see that (\ref{eq:ugly1}) then implies that the total running time is $2^{(1-\beta)n}$ for a constant $\beta>0$.


\end{proof}

\begin{remark}
It is not difficult to see that the presented algorithm actually solves a more general problem known as maximum $k$-cut. In this problem the goal is to partition the vertices into $k$ parts such that the number of edges joining different parts is maximal possible.
\end{remark}

\section{The traveling salesman problem}

%
%
%

\begin{theorem}\label{thm:tsp}
The traveling salesman problem on graphs of average degree $d=O(1)$ can be solved 
in time $O^*(W \cdot (2-\varepsilon(d))^n)$ and polynomial space $O(n^2\log W)$ ($W$ being the maximal edge weight).
\end{theorem}
\begin{proof}
Let $k$ be a parameter to be defined later. Guess $k$ vertices $v_1, \ldots, v_{k} \in V$ that split an optimal Hamiltonian cycle into $t$ paths of length $n/k$. All such $k$-tuples
can be enumerated in time~$n^k$.
Then the corresponding weighted $(V,k,\mathcal{F})$-partition problem is defined as follows.
$\mathcal{F}=(\mathcal{F}_1, \ldots, \mathcal{F}_k)$ where $\mathcal{F}_i$ consists
of all subsets $S \subseteq V$ of size $|S|=\frac{n}{k}$ for which $G$ contains a Hamiltonian path $P$ such that
$P$ starts in $v_i$, goes through all vertices from $S$, and ends in $u_i$ such that $(u_i,v_{(i \bmod k) +1}) \in E$. The weight $w(S)$ of such a set $S$ is equal to the minimal possible weight of a path $P$ (including the weight of the edge $(u_i,v_{(i \bmod k) +1})$). It is not difficult to see that solving the traveling salesman problem is equivalent to solving the $(V,k,\mathcal{F})$-partition problem if the vertices $v_1,\dots,v_k$ are guessed correctly.

Note that any $\mathcal{F}_i$ can be enumerated in time
$\binom{n}{\frac{n}{k}} \cdot \left(\frac{n}{k}\right)! $
(the first term is for guessing the subset of vertices $S$, the second one
is for guessing the order of these vertices in an optimal path~$P$).
Recall also that guessing the vertices $v_1, \ldots, v_k$ requires $O(n^k)$ time. By choosing $k=\sqrt{n}$ we turn both these estimates into subexponential $2^{o(n)}$.

Now we turn to constructing the corresponding system of families with infants. Let $\nu \ge 1, \mu<1,a \ge 0, 0<c<1$ be constants to be defined later. Let $A,Y \subseteq V$ be provided by Lemma~\ref{lemma:ugly}. 
Consider an optimal Hamiltonian cycle $C$ in the graph. Let $Y' \subseteq V$ be the successors of the vertices from $Y$ in the cycle $C$ and let $A'=A \setminus Y'\setminus\{v_1,\ldots,v_k\}$. 
Note that $|A'| \ge \frac{|A|}{2}-k$ (since $|A| \ge 2|Y|$) and for each vertex $v \in A'$ its predecessor $u$ in the cycle $C$ belongs to $V \setminus Y$.

Let $A'=\{1, \ldots, p\}$. Then for all $i=1,\ldots, p$, $R_i=N_{G\setminus Y}[i]$ and $r_i=i$. Clearly, $|R_i| \le q=2d+1$. By choosing $c<\frac{1}{2d+1}$ we can guarantee that $pq \le n$. All $R_i$'s are disjoint since $A'$ is an independent set in $(G \setminus Y)^2$. Finally, if the set $Y'$ is guessed correctly (i.e., $Y'$ are indeed successors of $Y$ in the optimal cycle~$C$) then $((R_1,r_1), \dots, (R_p,r_p))$ is a $(p,q)$-system of families with infants. Indeed, if $r_i \in F$ for some $F \in \mathcal{F}_j$, then $r_i$'s predecessor in $C$ must lie in $V\setminus Y$, i.e., in $R_i$
(and $r_i$ must have a predecessor since $r_i \neq v_1, \ldots, v_k$). 

By Theorems~\ref{thm:weighted} and \ref{thm:encenum} the total running time does not exceed
\[2^{o(n)} \cdot \binom{|A|}{|Y|} \cdot 2^n \cdot \left(\frac{2^q-1}{2^q}\right)^p\cdot W.\]

Recall that $p=|A'|-k \ge |A|/2-|A|/6=|A|/3$ for large enough~$n$ (since $|A|=cn$ and $k=\sqrt{n}$). Choose $a=1$, $\nu=1$, $\mu=\left( \frac{2^{2d+1}-1}{2^{2d+1}}\right)^{1/3}$. Then (\ref{eq:ugly1}) implies that 
\[\binom{|A|}{|Y|}\cdot \left(\frac{2^q-1}{2^q}\right)^p < 2^{-\beta n}\]
for a constant $\beta > 0$. Thus the total running time is $O^*(W\cdot (2-\varepsilon)^n)$.

\end{proof}

\section{Counting perfect matchings}
The algorithm for counting perfect matchings uses many ideas from the algorithm for the traveling salesman problem presented in Theorem~\ref{thm:tsp}.

\begin{lemma}\label{lm:antimatching}
Let $G$ be a graph of average degree $d=O(1)$. Then $\bar{G}$ (the complement of~$G$) contains a matching consisting of $\frac{n}{2}-O(1)$
edges.
\end{lemma}
\begin{proof}
Clearly $|V_{\ge \frac{n}{3}}| \le \frac{nd}{n/3}=3d$. After removing all these vertices from $G$ we get a graph with at least $n-3d$ vertices such that the degree of each vertex is at most $\frac{n}{3}$. This implies that the degree of each vertex in the complement of this graph is at least $n-3d-1-\frac{n}{3}$. This is at least $\frac{n}{2}$ for large enough $n$. By Dirac's theorem~\cite{D1952} this graph is Hamiltonian and hence contains a perfect matching.
\end{proof}

\begin{theorem}
The number of perfect matchings in a graph $G$ with $2n$ vertices of average degree $d=O(1)$ can be found in time $O^*((2-\varepsilon(d))^n)$ and polynomial space.
\end{theorem}
\begin{proof}
Assume that the vertices of $V=\{1, \ldots,2n\}$ are numbered in such a way
that 
\begin{equation}\label{eq:loops}
(1,n+1),(2,n+2), \ldots, (m,n+m) \not \in E
\end{equation}
where $m=n-O(1)$. Such a numbering exists due to Lemma~\ref{lm:antimatching} (and can be found efficiently since we can find a maximum matching in $\bar{G}$ in polynomial time). Following~\cite{B2012} and \cite{CP2013} we reduce the problem of counting perfect matchings to a problem of counting cycle covers of a special type. We construct an auxiliary multigraph $G'=(V',E')$ where $V'=\{1,\ldots,n\}$
and each edge $(i,j) \in E$ is transformed into an edge 
$e=((i \mod n)+1, (j \mod n)+1) \in E'$
with the label $l(e)=\{i,j\}$. In other words, we contract each pair of vertices
$(1, n+1), \ldots, (n,2n)$ and on each edge we keep a label showing where it originates from. 
Any two vertices in $G'$ are joined by at most $4$ edges.
The average degree of $G'$ is at most~$2d$. 

Recall that a cycle cover of a multigraph is a collection of cycles such that each vertex belongs to exactly one cycle. In other words, this is a subset of edges such that each vertex is adjacent to exactly two of these edges (and a self-loop is thought to be adjacent to its vertex two times).

The important property of the graph $G'$ is the following:
each perfect matching in $G$ corresponds to a cycle cover $C \subseteq E'$ in $G'$
such that $\cup_{e \in C}l(e)=V$ and vice versa. Indeed, each vertex $i$ in $G'$ is adjacent to exactly two edges. These two edges have different labels so they correspond to edges in the original graph $G$ that match both $i$ and $i+n$. 

We have reduced the problem to counting cycle covers with disjoint labels in~$G'$.
We further reduce the problem to counting cycle covers without self-loops.
Note that by (\ref{eq:loops}) $G'$ has at most $s=O(1)$ self-loops. For each such loop $e=(i,i)$ we can consider two cases: to count the number of cycle covers with $e$ we 
count the number of cycle covers in $G'$ without the vertex~$i$; to count the number 
of cycle covers without $e$ we can just remove the loop $e$ from $G'$ and count the number
of cycle covers. This way, we reduce the problem to $2^s=O(1)$ problems of counting cycle
covers in a multigraph without self-loops.

Note that we only need to ensure that the labels on adjacent edges are disjoint.
Namely, one of the edges adjacent to a vertex $i$ in $G'$ must contain $i$ in its label while the other one must contain $i+n$.

Recall that Remark~\ref{rem:counting} claims that we can use Theorem~\ref{thm:enc} to find the number of valid partitions. We consider two important special cases. 
\begin{enumerate}
\item 
{\em Counting the number of covers by $k$ cycles of length at most $\sqrt{n}$.}
The good thing with covers by short cycles is that all such cycles can be easily enumerated in subexponential time and polynomial space. Indeed, to enumerate all possible short cycles one first goes through all possible length values $l$ (at most $\sqrt n$ choices), then through all subsets of vertices of a cycle (at most $\binom nl \le n^{\sqrt{n}}=2^{o(n)}$ choices), then through all orders of the vertices (at most $l! \le \sqrt{n}! \le \sqrt{n}^{\sqrt{n}}=2^{o(n)}$ choices), and for each 
pair of consecutive vertices go through all possible
edges that join them (at most $4^{\sqrt{n}}=2^{o(n)}$ choices). To guarantee that each cycle is counted only once instead of considering all $l!$ orderings we consider $\frac{l!}{2l}$
of them.

We then proceed as in Theorem~\ref{thm:tsp}. Namely, let $A,Y \subseteq V'$ be provided by Lemma~\ref{lemma:ugly}. We then go through all possible sets $Y' \subseteq A$ such that $|Y'| \le |Y|$. The set $Y'$ is thought as predecessors of the vertices from $Y$ in cycle covers that lie in $A$ (since we consider cycle covers without self-loops each vertex has a predecessor). Then, each vertex from $A'=A \setminus Y'$ has a predecessor in $V' \setminus Y'$ and we can find a system of families with infants in exactly the same way as in Theorem~\ref{thm:tsp}. Note that for each $Y'$ we enumerate only short cycles where $Y'$ is indeed the set of predecessors of~$Y$. 
Thus, the number of covers by cycles with disjoint labels of length at most $\sqrt{n}$ can be counted in time $O^*((2-\varepsilon(d))^n)$ and polynomial space. 

Note that in fact we count ordered cycle covers
(where the order of cycles does matter), so we should also divide the resulting number
by~$k!$.

\item 
{\em Counting the number of covers by $k$ cycles of length greater than $\sqrt{n}$.}
Long cycles cannot be quickly enumerated, but we know for sure that if all the cycles
have length at least $\sqrt{n}$ then the number $k$ of cycles is at most $\sqrt{n}$. 
We go through all possible cases of lengths of $k$ cycles
(at most $n^{\sqrt n}=2^{o(n)}$ choices). Then we proceed as in Theorem~\ref{thm:tsp}. 
Namely, 
for each cycle of length $l$
we go through all possible tuples of $t=\frac{l}{\sqrt{n}}$ vertices $v_1, \ldots, v_t$ (call them pivot vertices)
such that:
\begin{itemize}
\item $v_1$ is the minimal vertex of the considered cycle (this ensures that the corresponding cycle is counted only once);
\item the length of a subpath of the considered cycle between $v_i$ and $v_{i+1}$
is $\sqrt{n}$ for all $i=1, \ldots, t-1$
\end{itemize}
(at most 
$n^{2\sqrt{n}}=2^{o(n)}$ choices since for long cycles there are at most 
$2\sqrt{n}$ such vertices in total). 
For each vertex from the tuples we also go through all possible labels of two edges adjacent to it (at most $(2n)^{2\sqrt{n}}$ choices since each vertex of $G'$ has at most $2n$ adjacent edges).
The subpaths of length at most $\sqrt{n}$ can be enumerated in subexponential time and polynomial space. Then we can solve the corresponding partition problem 
in time $O^*((2-\varepsilon(d))^n)$ and polynomial space as shown in Theorem~\ref{thm:tsp}.

As in the previous case, we divide the resulting number by $k!2^k$ since we counted ordered cycle covers and each cycle was counted twice (in clockwise and anticlockwise directions).
\end{enumerate}

We proceed to the general case. To count the total number of cycle covers we first go through 
all $O(n^2)$ pairs $(k_1,k_2)$ where $k_1,k_2$ is the number of short and long cycles, respectively. We then go through all possible values of lengths of all long cycles
and all tuples of pivot vertices. Given the lengths of long cycles, by $k'_2$ we denote the total number of subpaths of length $\sqrt{n}$ which are needed to cover the long cycles. When these objects are fixed we take subsets $A,Y \subseteq V'$ as given by Lemma~\ref{lemma:ugly}. We then go through all possible $Y' \subseteq A$ such that $|Y'| \le |Y|$ and $Y'$ are thought as predecessors of $Y$ in $A$. 

We define the corresponding $(V,k_1+k'_2,\mathcal F)$-partition problem as follows. For $1\le i\le k_1$, ${\mathcal F}_i$ contains all cycles of length $\le\sqrt{n}$ where adjacent edges have disjoint labels. For $k_1+1\le i\le k_1+k'_2$, ${\mathcal F}_i$ is the set of subpaths of length $\sqrt{n}$ (note that some of  ${\mathcal F}_i$'s correspond
to the last part of a long cycle and hence consist of subpaths of length
smaller than~$\sqrt{n}$) where adjacent edges have disjoint labels. In order to get the number of unordered cycle covers, we need to divide the resulting value by $k_1!k_2!$.

Let then $A'$ be $A \setminus Y'$ with all pivot vertices removed. Using $A'$ we can find a system of families with infants exactly as it is done in Theorem~\ref{thm:tsp}.

\end{proof}

\section{Further directions}
As a conclusion we would like to pose some open questions which arise from the suggested algorithms. The first question is to obtain an efficient polynomial-space algorithm for the chromatic number problem. For example, $O^*(2^n)$-time algorithm for coloring in general graphs or $O^*((2-\varepsilon)^n)$-algorithm for coloring in bounded maximum/average degree graphs. A long-standing open question is to solve the traveling salesman problem in time $O^*((2-\varepsilon)^n)$. Any evidence of hardness of TSP would be of great interest as well.

\bibliographystyle{plain}
\bibliography{allrefs}


\end{document}